\documentclass[journal]{IEEEtran}
%
%
\usepackage{ifpdf}

\newtheorem{proposition}{Proposition}

\hyphenation{wave-guide}

%
\ifCLASSOPTIONcompsoc
 \usepackage[nocompress]{cite}
\else
 \usepackage{cite}
\fi
%

%
\ifCLASSINFOpdf
 \usepackage[pdftex]{graphicx}
 \DeclareGraphicsExtensions{.pdf,.jpeg,.png}
\else
\usepackage[dvips]{graphicx}
 \DeclareGraphicsExtensions{.eps}
\fi
%
%

%
\usepackage[cmex10]{amsmath}

%

%
\usepackage{array}

\usepackage{mdwtab}

\hyphenation{op-tical net-works semi-conduc-tor}

\begin{document}
%
\title{Channel Independent Cryptographic Key Distribution}
%
%
%
%

\author{Benjamin.~T.~H.~Varcoe\protect\\
\thanks{School of Physics and Astronomy, University of Leeds\protect\\
E-mail: b.varcoe@leeds.ac.uk
}
\thanks{Manuscript received .............}}

%


\markboth{IEEE TRANSACTIONS ON INFORMATION THEORY,~Vol.~ , No.~ , ~20??}%
{Varcoe: Channel Independent Cryptography}

\maketitle

{%
\begin{abstract}
This paper presents a method of cryptographic key distribution using an `artificially' noisy channel.
This is an important development because, while it is known that a noisy channel can be used to generate unconditional secrecy, there are many circumstances in which it is not possible to have a noisy information exchange, such as in error corrected communication stacks.
It is shown that two legitimate parties can simulate a noisy channel by adding local noise onto the communication and that the simulated channel has a secrecy capacity even if the underlying channel does not. 
A derivation of the secrecy conditions is presented along with numerical simulations of the channel function to show that key exchange is feasible. 
\end{abstract}

\begin{IEEEkeywords}
Cryptography, secret key agreement, public discussion protocols,
secrecy capacity, wire-tap channel, privacy amplification
\end{IEEEkeywords}
}


\IEEEdisplaynotcompsoctitleabstractindextext

%
\IEEEpeerreviewmaketitle

\section{Introduction}

\IEEEPARstart{A} fundamental aspect of cryptography is the ability to exchange information in perfect secrecy.
However, almost all current forms of encryption rely on computational rather than information theoretic security.
Computationally secure coding uses encryption techniques for which the known hacking algorithms require lengthy calculations and they therefore assume that the eavesdropper has a limited computational capability. 
Information theoretic security, on the other hand, makes no assumptions about an eavesdroppers capability. 
This is an important distinction because the computational difficulty can change with advances in algorithms and increased speed of calculation, leading to significant cryptographic breaks. 

One solution to producing a perfectly encrypted message is the One Time Pad where the cryptographer has an infinite supply of symmetric key that is consumed as a resource in encrypting information. 
However, this creates the obvious problem of creating and securely distributing large amounts of symmetric key material.
To address this problem several methods have been proposed for creating and agreeing on a perfectly secure key of arbitrary length under information theoretic security\cite{Grosshans2003,Gisin2002,Wyner1975,Csiszar1978,Maurer1997,Bennett1984,Maurer1993,Maurer1999}.

In particular, it has been shown \cite{Wyner1975,Csiszar1978,Maurer1993,Ahlswede1993} that channel noise can be exploited to create and distribute symmetric key material in near perfect secrecy.
The essential element is the use of feedback \cite{Maurer1993,Maurer1997,Bennett1988,Bennett1995,Maurer1999,Ahlswede1993}, which exploits uncertainties in Eve's (the eavesdropper's) information, via a specially prepared sequence of messages, to generate random bit sequences known to the  communicating pair, Alice and Bob, and unknown to Eve.
Hence, the  secrecy of the channel is given by the combination of  information transmitted over both the noisy and noise-free channels.


The noisy channel model has been studied for several decades now (see for example \cite{Wyner1975,Csiszar1978,Bennett1988,Bennett1984,Maurer1993,Maurer1999}) and it has been well established that when the mutual information between two distributions A and B exceeds the mutual information between A and E (that is $I(A;B)>I(A;E)$) it is possible to build a secret key from distributions A and B excluding E.
Maurer \cite{Maurer1993} recognized that this condition was too strong and there are circumstances when $I(A;B)<I(A;E)$ and/or $I(A;B)<I(B;E)$ but secret key generation is still possible. 
In particular, feedback between Alice and Bob permits the generation of a sequence of messages $M^t=[M_1,M_2,\dots,M_t]$ which can be used to generate $\hat{A}=M^t A$, $\hat{B}=M^t B$ and $\hat{E}=M^t E$ such that
\[I(\hat{A};\hat{B})>I(\hat{A};\hat{E}).\] 

 
The scenario works as follows, Alice and Bob first exchange symbols over the noisy channel. 
They then use feedback signals exchanged over an error-corrected channel to arrive at a secret key. 
Information Alice, Bob and Eve obtain about the secret key therefore comes from correlations in the signals and the joint information gained from the subsequent feedback messages.

The conditions of feedback are therefore an important aspect of distilling the secret key. 
The feedback must be constructed to allow Alice and Bob to use correlations in their data to exclude Eve. 
One way for them to do this is for Alice and Bob to reduce the errors in a subsequent communication channel. 
There are several codings that could be used for this purpose\cite{VanDijk1998}.
As Eve has errors in her knowledge of both Alice's and Bob's bit sequences, the coding only gives her partial information and she therefore receives the signals with errors cascaded over either Alice's or Bob's channel (depending on the nature of the reconciliation). 
It has been shown \cite{VanDijk1998,Maurer1999,Maurer1993} that this gives Alice and Bob an advantage over Eve.
This procedure can be cascaded to align their information while maintaining a well defined upper limit on Eve's information.
Moreover, knowing the upper limit on Eve's information, Alice and Bob can subsequently use privacy amplification \cite{Bennett1988,Maurer1997} to reduce Eve's information to an arbitrarily small level. 

The essential element of secret key generation is that the  noise channel is capable of generating a sufficient and predictable level of uncertainty in the information received by Eve.
Quantum key distribution\cite{Gisin2002,Grosshans2003} is one way of ensuring a guaranteed minimum level of channel noise and it comes with the added benefit that the security is protected by the physical limitations of quantum mechanical measurement.
However, as noted above, any noisy communications channel where Alice, Bob and Eve receive noisy copies of a signal, is also theoretically secure when it is assumed that each party receives uncorrelated noise. 


Hence, an implementation of any cryptographic scheme will be strongly dependent on the specific properties of the channel and all of the apparatus surrounding it, because changes to the detectors, source or even medium (e.g. free space or waveguide) can change the noise and hence the secrecy conditions.
Secrecy can even be removed entirely if the noise falls below a given threshold. 
Continuous variable quantum communication, for example, typically includes essential device calibration \cite{Fossier2009}.
For this reason it is interesting to consider a method that would generate unconditional channel noise and therefore secrecy in key generation.

The current paper presents a method of applying {\it unconditional} noise to generate secrecy in a channel with no {\it a priori} secrecy capacity. 
Specifically it is found that if the signal is degraded locally (independent of transmission or detection noise) and by {\it both} Alice and Bob independently, the channel gains the ability to create a secret symmetric key even if it otherwise has no prior secrecy capacity.

\section{Secret Keys from Channel Noise}

It has been established that a noisy channel where, in particular, Eve and Bob both have reception errors is capable of secret communication while a perfect, noise-free communication channel has zero secrecy capacity\cite{Maurer1993,Maurer1999, Maurer2007,Wyner1975,Csiszar1978,Bennett1988,Bennett1984}. 
However there are many channels in where noisy communication is not possible, or where an unconditionally noisy channel is not guaranteed. 
In such channels we propose that Alice and Bob use a local noise source to create an artificially noisy channel, simply by adding local noise {\it after} the transmission (Fig.\ref{ChannelPicture}).
Although Eve can receive a noiseless copy of the original data, it will now be shown that by altering their channel Alice and Bob have effectively simulated a noisy channel and can distill a correlated data set unknown to Eve.
Alice and Bob have effectively generated a non-zero secrecy capacity at the expense of reducing the overall channel capacity.

In the arguments that follow it is assumed that Eve can detect all transmissions from Alice (and Bob) with no error and that she receives a signal that is identical to the receiver at all times.
This represents the best possible case for Eve.


\begin{figure}[h,t,b]
\begin{center}
\begin{picture}(100,100)(-60,-20)
\put(-70,0){\circle{8}}
 
 \put(-90,20){\framebox(40,20){$N^A_i$}}
 \put(-70,-4){\line(0,1){23}}
 \put(-70,15){\vector(0,-1){10}}
 
 \put(-85,55){Alice}
 \put(-110,-15){\dashbox{1}(80,85)}

 \put(-105,-10){\framebox(20,20){$X_i$}}
 \put(-55,-10){\framebox(20,20){$R_i$}}
 \put(-35,0){\vector(1,0){135}}
 \put(-55,0){\vector(-1,0){30}}

\put(80,0){\circle{8}}
  \put(75,55){Bob}
  \put(60,20){\framebox(40,20){$N^B_i$}}
  \put(80,-4){\line(0,1){23}}
 \put(80,15){\vector(0,-1){10}}
 \put(-25,15){\dashbox{1}(75,55)}

  \put(100,-10){\framebox(20,20){$Y_i$}}

  \put(-5,0){\vector(0,1){45}}
  \put(-10,26){\large$\oplus$}
 \put(10,55){Eve}
  \put(-15,45){\framebox(20,20){$Z_i$}}
  \put(5,20){\framebox(40,20){$N^E_i$}}
  \put(-8,29){\line(1,0){13}}
  \put(5,29){\vector(-1,0){7}}
 \put(55,-15){\dashbox{1}(70,85)}

\end{picture}
\end{center}
\caption{This figure presents the communication scheme, where Alice initially transmits R to Bob (and Eve) over a public line. Alice and Bob (and Eve) then add noise to the random variable using local noise sources, $N^A$, $N^B$ and $N^E$. This results in the variables $X$, $Y$ and $Z$ for Alice, Bob and Eve respectively, which will be the starting point for further communication.} 
\label{ChannelPicture}
\end{figure}
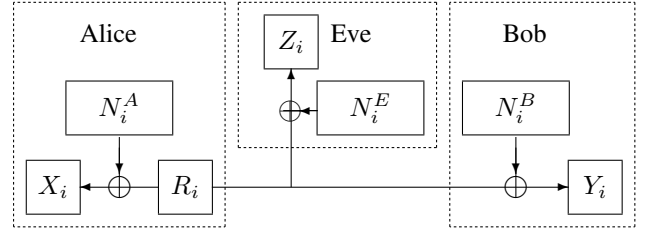

In this model (without loss of generality) Alice transmits a random number $R$ to Bob over an otherwise noiseless channel. 
Alice, Bob and Eve then use a local true random number generator to generate the random number sets $N^A$, $N^B$ and $N^E$ respectively, such that Eve cannot independently deduce the values $N^A$ or $N^B$ from her knowledge of $N^E$ and $R$.
Alice and Bob and Eve use these to generate the variables 
\begin{equation}X_i=R_i\oplus N^A_i,\label{noise1}\end{equation} 
\begin{equation}Y_i=R_i\oplus N^B_i\label{noise2}\end{equation} 
and 
\begin{equation}Z_i=R_i\oplus N^E_i\label{noise3}\end{equation}
respectively.
$N^A_i$, $N^B_i$ and $N^E_i$ are random biased bits with probabilities $P(N^A_i=0)=1-\alpha$, $P(N^A_i=1)=\alpha$, $P(N^B_i=0)=1-\beta$, $P(N^B_i=1)=\beta$ (where $0<\alpha,\beta<1$ ), $P(N^E_i=0)=1-\gamma$ and $P(N^E_i=1)=\gamma$ (where $0\leq\gamma\leq1$).  
The range of $\gamma$ expresses the fact that the value $N^E_i$ could contain any level of noise ranging from complete uncertainty to being an error free reception of $R_i$.

This effectively creates new binary symmetric channels between Alice and Bob and Alice and Eve shown in Fig. \ref{ErrorPicture}.
Where the bit flip probabilities $\alpha$, $\beta$ and $\gamma$ are related to the binary symmetric error rates $\epsilon$ and $\delta$ via $\epsilon =\alpha+\beta-2\alpha\beta$ and $\delta=\alpha+\gamma-2\alpha\gamma$. 
Importantly both $\delta>0$ and $\epsilon>0$ even if $\gamma=0$ as long as both $\alpha>0$ and $\beta>0$.

It is worth noting that Eve might tamper with the initial transmission of R, which effectively amounts to creating a new variable $T_i$, where $P(T_i=0)=1-\tau$, $P(T_i=1)=\tau$. 
This would lead to Bob receiving a new variable $Y'_i=Y_i \oplus T_i$. 
This has the same effect as transmission noise and may make key agreement between Alice and Bob more difficult, but it is not impossible as long as $I(Y;Y')\neq 0\neq I(X;Y') $ \cite{Maurer1993}. 
Tampering is considered in more detail below.

\begin{figure}[t,h,b]
\begin{center}
\begin{picture}(80,100)(-60,-45)
\put(-45,-20){\small $1-\delta$}
\put(-45,20){\small $1-\delta$}
\put(-45,8){\small $\delta$}
\put(-45,-8){\small $\delta$}

\put(-115,-25){\dashbox{1}(30,50)}
\put(-108,30){Eve}
\put(-103,10){0}
\put(-103,-15){1}
\put(-102,-40){Z}

\put(-15,-25){\dashbox{1}(30,50)}
\put(-8,30){Alice}
\put(-3,10){0}
\put(-3,-15){1}
\put(-2,-40){X}

\put(10,17){\vector(1,0){80}}
\put(10,15){\vector(3,-1){80}}

\put(10,-12){\vector(1,0){80}}
\put(10,-10){\vector(3,1){80}}

\put(85,-25){\dashbox{1}(30,50)}
\put(92,30){Bob}
\put(97,10){0}
\put(97,-15){1}
\put(98,-40){Y}

\put(-10,17){\vector(-1,0){80}}
\put(-10,15){\vector(-3,-1){80}}

\put(-10,-12){\vector(-1,0){80}}
\put(-10,-10){\vector(-3,1){80}}

\put(25,-20){\small $1-\epsilon$}
\put(25,20){\small $1-\epsilon$}
\put(37,9){\small $\epsilon$}
\put(37,-9){\small $\epsilon$}

\end{picture}
\end{center}
\caption{The transmission line model presented in figure \ref{ChannelPicture} in which Alice transmits a signal to Bob over an error free channel with local degradation of the data, can be rewritten as a series of binary symmetric channels linking Alice and Bob and Alice and Eve, where $\epsilon =\alpha+\beta-2\alpha\beta$ is the convolution of local noise between Alice and Bob and $\delta=\alpha+\gamma-2\alpha\gamma$ is the convolution of local noise between Alice and Eve, with $\alpha$, $\beta$ and $\gamma$ as local bit flip probabilities (defined in the text). }
\label{ErrorPicture}
\end{figure}
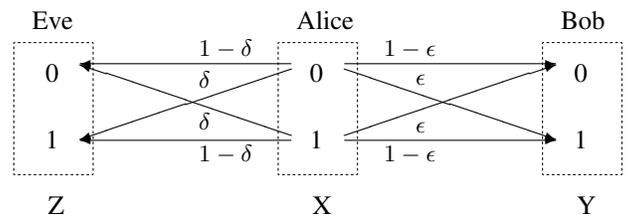


It is therefore proposed that for $\alpha>0$, $\beta>0$ and $\gamma\ge 0$ a channel has been created with a capacity for secrecy, irrespective of the properties of the underlying channel.
This is a new condition introduced in this paper that does not match the conditions established previously \cite{Maurer1999,Maurer1993,Ahlswede1993}.
It is therefore not immediately clear that a protocol can even be constructed with the capacity for secrecy.
The secrecy proof is re-visited in the following proposition. 

\begin{proposition}
For the binary distribution $X$, $Y$ and $Z$ given by the conditions in equations \ref{noise1}, \ref{noise2} and \ref{noise3} and assuming $\alpha=\beta$, the proposition is that for any $\gamma\geq 0$ a coding sequence exists such that variables $\hat{X}$, $\hat{Y}$ and $\hat{Z}$ can be derived from $X$, $Y$, and $Z$ such that the channel has a positive secrecy capacity.
\end{proposition}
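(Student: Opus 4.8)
The natural route is \emph{advantage distillation} by a two-way (feedback) protocol of the Maurer repeat-code type: whenever $\gamma$ is small, Eve's effective channel to $X$ (and to $Y$) is the less noisy one, so no purely one-way local processing of $X,Y,Z$ can reverse $I(X;Z)\ge I(X;Y)$, and interaction is essential. Concretely, I would partition the symbols into blocks of length $N$; for a block, Alice draws a fresh uniform bit $C$ and publicly announces $A_i:=X_i\oplus C$, $i=1,\dots,N$. Bob forms $\tilde Y_i:=Y_i\oplus A_i=(N^A_i\oplus N^B_i)\oplus C$; the $\tilde Y_i$ all agree iff $N^A_i\oplus N^B_i$ is constant over the block, and Bob \emph{accepts} in that case, outputting the common value $\hat Y$ of the $\tilde Y_i$ as his estimate of $C$ and announcing accept/reject on the public line. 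On an accepted block put $\hat X:=C$ and let $\hat Z$ be Eve's full view (her $Z$-symbols together with every public message). It then suffices to show $I(\hat X;\hat Y)>I(\hat X;\hat Z)$, since the Csisz\'ar--K\"orner / Maurer lower bound on the secret-key rate then gives $S(\hat X;\hat Y\|\hat Z)>0$, i.e.\ positive secrecy capacity of the derived triple.

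The core is an error-exponent comparison, and the hypothesis $\alpha=\beta$ is exactly what makes it clean. With $\alpha=\beta$ one has $\epsilon=2\alpha(1-\alpha)$ and $1-\epsilon=\alpha^2+(1-\alpha)^2$; acceptance (the event that $N^A_i\oplus N^B_i$ is constant) has probability $(1-\epsilon)^N+\epsilon^N$, and conditioned on it $\hat Y\ne C$ only on the all-ones branch, so Bob's error is $P_B=\epsilon^N/\big((1-\epsilon)^N+\epsilon^N\big)\sim\big(\epsilon/(1-\epsilon)\big)^N$. Acceptance also \emph{re-weights} Alice's local noise: on the dominant all-zeros branch each $N^A_i$ is i.i.d.\ $\mathrm{Bern}(\alpha')$ with $\alpha':=\alpha^2/(\alpha^2+(1-\alpha)^2)$, while $N^E_i$ is untouched. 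Hence on that branch Eve sees, via $A_i\oplus Z_i=N^A_i\oplus N^E_i\oplus C$, exactly $N$ i.i.d.\ copies of $C$ through a binary symmetric channel of crossover $\delta':=\alpha'+\gamma(1-2\alpha')$, so her optimal reconstruction (majority voting, by this conditional i.i.d.\ structure) errs with probability $\sim\big(2\sqrt{\delta'(1-\delta')}\big)^N$, the all-ones branch being both exponentially rare and, since there $N^A_i$ is uniform, useless to her. The decisive elementary identity is $\epsilon/(1-\epsilon)=2\alpha(1-\alpha)/(\alpha^2+(1-\alpha)^2)=2\sqrt{\alpha'(1-\alpha')}$, so comparing Bob's exponent with Eve's amounts to comparing $\alpha'(1-\alpha')$ with $\delta'(1-\delta')$. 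Taking $\gamma\le\tfrac12$ without loss of generality, $\alpha<\tfrac12$ forces $\alpha'<\tfrac12$ and $\alpha'\le\delta'\le\tfrac12$, and monotonicity of $x\mapsto x(1-x)$ on $[0,\tfrac12]$ gives $\delta'(1-\delta')>\alpha'(1-\alpha')$ precisely when $\gamma>0$; thus $P_E$ decays with a strictly smaller exponent than $P_B$ and, for all large enough $N$, $P_E$ eventually dwarfs $h(P_B)$.

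To close: Fano gives $I(\hat X;\hat Y)\ge 1-h(P_B)$ (Bob's estimate errs with probability $P_B$), while the concavity bound $h(p)\ge 2p$ on $[0,\tfrac12]$ gives $H(\hat X\mid\hat Z)\ge 2P_E$, hence $I(\hat X;\hat Z)\le 1-2P_E$; since $2P_E>h(P_B)$ for $N$ large, $I(\hat X;\hat Y)-I(\hat X;\hat Z)\ge 2P_E-h(P_B)>0$, so $S(\hat X;\hat Y\|\hat Z)>0$. Following the accepted-block protocol with ordinary information reconciliation and privacy amplification, as outlined in the Introduction, then turns this into a usable key, and a tampering bit $T_i$ merely adds one more independent flip on Bob's side, shifting $\epsilon$, which changes nothing so long as the shifted value stays below $\tfrac12$.

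\textbf{Main obstacle.} The genuinely delicate step is establishing that Eve's reconstruction error decays with a \emph{strictly} smaller exponent than Bob's, uniformly as the advantage shrinks: the margin $\delta'(1-\delta')-\alpha'(1-\alpha')$ tends to $0$ as $\gamma\to0$, so $N$ must be chosen large in a way controlled by $\gamma$, and one must check that the polynomial pre-factors hidden in the ``$\sim$'' do not overturn the gap $2P_E-h(P_B)$. The endpoint $\gamma=0$ is genuinely special and lies outside the repeat-code estimate: there $\delta'=\alpha'$, the two exponents coincide, and in fact $X$ and $Y$ become conditionally independent given $Z=R$, so $Z$ alone reproduces the Alice--Bob correlations and the bound degenerates; that case would need a separate limiting treatment. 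A minor additional chore is to justify rigorously that majority voting is Eve's \emph{optimal} strategy and to bound her residual information using all of $\hat Z$ rather than just her decoded bit, though both follow from the conditional i.i.d.\ description above.
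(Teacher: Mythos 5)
Your protocol is exactly the paper's: Maurer-style repeat-code advantage distillation in which Alice broadcasts $X_i\oplus C$ over a block, Bob accepts only unanimous blocks, and one compares Bob's and Eve's residual errors about $C$. Where you diverge is the analysis. The paper commits to $N=2$, writes Bob's conditional error as $\epsilon^2/P_{Total}=4(\alpha-\alpha^2)^2/P_{Total}$ and lower-bounds Eve's error by the $w=N/2$ term of its binomial sum, concluding $\delta_{N=2}>\epsilon_{N=2}$ \emph{even for} $\gamma=0$; you instead run a large-$N$ error-exponent comparison ($2\sqrt{\alpha'(1-\alpha')}$ versus $2\sqrt{\delta'(1-\delta')}$ with $\delta'=\alpha'+\gamma(1-2\alpha')$) and obtain the advantage only for $\gamma>0$. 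Your remaining chores (optimality of majority voting on the dominant branch, polynomial prefactors, using all of $\hat Z$ rather than Eve's hard decision) are standard and fixable, so for $\gamma>0$ your argument goes through.

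The real issue is the endpoint, and here your caution is better founded than the paper's confidence. At $\gamma=0$ one has $Z=R$, so $X$ and $Y$ are conditionally independent given $Z$, hence $I(X;Y\mid Z)=0$ and the standard upper bound $S(X;Y\|Z)\le I(X;Y\mid Z)$ rules out \emph{any} coding sequence, not just the repeat code; the proposition as stated fails there. The paper's $N=2$ computation reaches the opposite conclusion for two reasons you implicitly avoid: (i) its sum counts Eve's tied observations (one $0$ and one $1$, which carry no information about $C$) as full errors, whereas maximum-likelihood decoding with fair tie-breaking errs only half the time there --- with that correction Eve's error becomes $(3\alpha^2(1-\alpha)^2+\alpha^4)/P_{Total}$, which is \emph{smaller} than Bob's $4\alpha^2(1-\alpha)^2/P_{Total}$ for $\alpha<\tfrac12$; and (ii) comparing hard-decision error rates understates Eve, who retains her soft likelihoods, and a direct computation of $I(C;\hat Z)$ versus $I(C;\hat Y)$ at $\gamma=0$ confirms Eve's information is strictly larger. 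So do not try to patch your argument to cover $\gamma=0$; instead state the result for $\gamma>0$ (equivalently $\delta'>\alpha'$), where your exponent comparison is the correct and essentially complete proof.
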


\begin{proof}
Alice and Bob open a virtual channel where Alice encodes a random number, $C$, using a simple encoding of sending an $N$-fold repeated message\cite{Maurer1993} encrypted using N-bits from $X$ via $\left[X_i \oplus C,X_{i+1} \oplus C,\dots, X_{i+N-1}\oplus C\right]$.
Bob calculates the sequence $C'$ using the sequence $Y$ via $\left[X_i \oplus C\oplus Y_i,X_{i+1} \oplus C\oplus Y_{i+1},\dots, X_{i+N-1} \oplus C\oplus Y_{i+N-1}\right]$ and accepts the value of $C'$ only when he has obtained the vectors $\left[0,0,\dots ,0\right]$ or $\left[1,1,\dots ,1\right]$. 
This encoding therefore creates a new binary symmetric channel between Alice and Bob with error probability, $\epsilon_{N}$, given by
\begin{equation}
\epsilon_{N}=\frac{P_{\rm Error}}{P_{\rm Correct}+P_{\rm Error}}=\frac{\epsilon^N}{(1-\epsilon)^N+\epsilon^N},
\label{beta}
\end{equation}
where $P_{Error}$ is the probability that the bit calculated by Bob is not equal to the bit encoded by Alice and $P_{Correct}$ is the probability that the two are equal. 
Eve can also calculate `$C$' using her variables $\left[Z_i,Z_{i+1},\dots,Z_{i+N-1}\right]$, however, it is assumed that Eve takes a more sophisticated approach than Bob and calculates the value of `$C$' based on the most likely value.
Errors in Eve's calculation are therefore determined by the number of sequences with $N/2$ errors or more, given that the sequence has been accepted by Bob. 
Eve's error in the $N$-fold virtual channel is $\delta_{N}$, assuming Eve adds no noise (i.e. $\gamma=0$).
Hence, the best match that Eve can achieve to Bob's data is given by the binomial probability\cite{Maurer1993,Maurer1999},
\begin{equation}
\delta_{N}=\frac{1}{P_{Total}}\sum_{w=N/2}^N\left(
\begin{array}{cc}
N\\
w
\end{array}
\right)
\left(p_{00}^{N-w}p_{01}^w+p_{10}^{N-w}p_{11}^w\right)
\label{gamma}
\end{equation}
where, $P_{Total}$ is the total number of states accepted by Bob, $p_{nm}$ is the probability that when the result of Alice's calculation ($R+Noise$) is $0$, Bob's result is $n$ and Eve's result is $m$.
In this specific case their values are given by,\\ 
\[P_{Total}= P_{Correct}+P_{Error}= (1-\epsilon)^N+\epsilon^N,\]
\[p_{00}= (1-\alpha)^2,\]
\[p_{01}= \alpha^2, \]
and
\[p_{10}=p_{11}=\alpha(1-\alpha).\]

Considering only the special case that Bob has accepted a sequence and the sequence decoded by Eve has equal numbers of 1s and 0s (that is, Eve has a 50\% transmission error), we are left with, 
\[
\delta_N > \frac{1}{P_{Total}}\left( 
\begin{array}{cc}
N\\
N/2
\end{array}
\right)\left(2(\alpha-\alpha^2)^{N}\right).\]
At this point a choice of coding is made, deciding on the specific value of $N=2$. 
Under these conditions the errors in Eve's calculation are given by,
\[\delta_{N=2} > \frac{4(\alpha-\alpha^2)^{2}}{P_{Total}}. \]
However equation \ref{beta} shows that when $N=2$ and $\alpha=\beta$, Bob's error is also given by
\begin{equation} \epsilon_{N=2}=\frac{\epsilon^2}{P_{Total}}=\frac{(2\alpha-2\alpha^2)^2}{P_{Total}}. \label{BobError}\end{equation}
Hence, $\delta_{N=2}$ is {\it always} strictly greater than $\epsilon_{N=2}$. 
It follows that $I(\hat{X};\hat{Y})>I(\hat{Y};\hat{Z})$ and therefore $S(\hat{X};\hat{Y}||\hat{Z})>0$ \cite{Maurer1999}.
This is sufficient even if it is only true for the special case of $N=2$.
\end{proof}


This proposition is useful because it also provides a protocol with which a secret key can be established. 
Alice can repeat the exchange with same transmission steps (and different random variables $C$), knowing that Eve's ability to calculate the value will be worse than Bob's because Eve started with an imperfect code $\hat{Z}$.
This can be repeated until Alice and Bob have $n$-bit sets $S_n$ and $S'_n$, where $Prob(S\neq S'_n)<\kappa $. 
Likewise Eve will have $S''_n$ for which $Prob(S=S''_n)<\lambda$, for some choice of $\kappa$ and $\lambda$.

For sufficiently small $\kappa$ we have 
\[I(S_n;S'_n)-I(S_n;S''_n)\approx  1-I(S_n;S''_n)\leq k,\]
indicating that Eve's knowledge of $S_n$ is upper bounded by $k$. 
Therefore, privacy amplification\cite{Bennett1995} can be used then be used to further reduce Eve's information using a hash function $F_{n,k}$ which performs a mapping from Alice and Bob's $n$ bit sequence to a sequence of length $n(1-k)$ bits.
This creates new $n(1-k)$ length sets $F_{n,k}(S'')$ and $F_{n,k}(S')=F_{n,k}(S)$ such that $I(F_{n,k}(S'');F_{n,k}(S))\approx0$. 
If the hash function $F_{n,k}$ is decided after the protocol has been completed, Eve cannot influence the exchanges in advance in order to create a favorable situation. 




Figure \ref{ErrorRate} presents the results of a simulation of random data exchanged between Alice and Bob, followed by adding local noise (the first two steps in the protocol).
The simulation modeled a real exchange, where bit strings and added noise were generated using a random number generator and the signal was sent virtually between analysis programs. 
In this case the random number generator fails a universal statistical test \cite{Maurer1991} so this particular exchange has limited capacity for generating actual secrecy, however, it does allow us to investigate the statistical effects of added noise.
The solid line is an evaluation of equations \ref{beta} and \ref{gamma} and the open and closed circles are the results of the simulation for Bob and Eve respectively.
This simulation shows that for all values of noise $\alpha>0$ to $\alpha<1/2$ (with $\alpha=\beta$) the noise in Eve's channel exceeds the noise in Bob's channel following the first exchange.

\begin{figure}[!t]
\centering
\begin{picture}(100,180)(10,0)
6\put(-60,0){\includegraphics[width=3.2in]{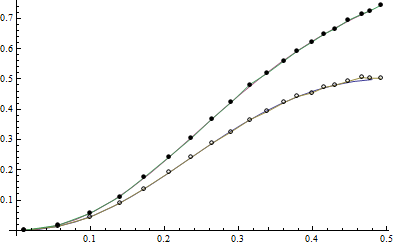}}
\put(-60,150){\it $N=2$ Channel Error}
\put(45,95){$\epsilon_{N=2}$}
\put(70,95){\vector(1,-1){20}}
\put(45,118){$\delta_{N=2}$}
\put(70,118){\vector(1,-1){20}}
\put(170,-5){$\alpha$}
\end{picture}
\caption{Theory vs simulation evaluations of the binary symmetric channel errors $\gamma_N$ and $\epsilon_N$ for the virtual channel for N=2 as a function of the error rate $\alpha$. The solid line is from equations \ref{beta} and \ref{gamma} and the points are the results of the simulations. For this graph, $\alpha=\beta$ and 100,000 samples were taken. It shows that the error rate for Eve always exceeds that of Bob for all choices of $\alpha$. }
\label{ErrorRate}
\end{figure}

\section{Eavesdropping}
It is also worth considering the type of eavesdropping attacks that might be expected and how they are thwarted.
For this purpose, the system can be broken down into two elements; the private channel in which the key material is exchanged and the public channel in which the two way communication used to distill the secret key takes place. 

The private channel is always used for the exchange of random variables which will be unknown to Bob in advance of the exchange. 
Hence it is possible that Eve could tamper or otherwise interfere with these transmissions by removing, altering or inserting symbols without Bob knowing. 

It is also assumed that Eve is only able to read but not modify messages in the authenticated public channel. 
Authentication in this case meaning that Alice and Bob are authenticated users rather than referring to the security of the connection itself. 

Eve has two possible attacks. Firstly she can look for correlations in the data exchanged and try to use this to gain insight into the final key (more than the insight she gains by simply following Alice and Bob's protocol). 
Secondly, she can manipulate the symbols exchanged; for example, by performing selected bit flips in some of the data exchanged. 

The first attack carries no additional information because the key is not sourced from the initial data but rather from the elements $C$ used in subsequent exchanges. 
Moreover the noise, $N^A$ and $N^B$ and the virtual channel value $C$ are chosen such that $I(X;C)=I(Y;C)=I(Z;C)=0$, hence no information about $C$ can be gained from observations of the initial code $R$.

The second attack is somewhat more important because it can go unnoticed especially as Bob is already expecting unknown random data.
However, as the statistics cannot be altered from the original and the number of bits must still be the same, Bob can perform local checks to ensure that the data is still random.
Therefore the attack may be quite subtle.
For example, it could be assumed that Eve's goal in manipulating the data is not to reveal the secret key itself but rather to reveal a few additional bits of each exchange that will be unaccounted for in the privacy amplification. 
Eve may use this to gain a cryptographic break over the final key (Eve may, for example, wish to reduce the search space in a brute force key guessing attack). 
She may even consider an attack where small amounts of information accumulated over a very long period of time eventually reveals significant amounts of information (precisely how this is achieved is not considered here).
A low level tampering attack is therefore an important consideration.

As noted above, any tampering attack would take the form of introducing the tamper variable $T$, to the message text via $R'_i=R_i\oplus T_i$ before it is sent on to Bob. 
Bob's local variable therefore becomes \[Y'_i = Y_i \oplus T_i = R_i \oplus N^B_i \oplus T_i.\] 
Eve cannot know $N^B$ therefore it is necessarily the case that $I(N^B;T)=0$.
The act of tampering modifies the binary symmetric channel between Alice and Bob (figure \ref{ErrorPicture}) changing the error rate from $\epsilon$ to $\epsilon'$, \[\epsilon'=(\beta+\tau)+\alpha-2(\beta+\tau)\alpha=\epsilon+\Delta\epsilon\]
where $\Delta\epsilon = \tau(1-2\alpha)$. 
The result is an increase in the channel noise between Alice and Bob without changing the channel noise between Eve and Alice, assuming that Eve does not incorporate the tamper variable in her own data (if she does, her error actually exceeds Bob's and secrecy is maintained). 

\begin{figure}[!t]
\centering
\begin{picture}(100,160)(10,0)
\put(-60,0){\includegraphics[width=3.2in]{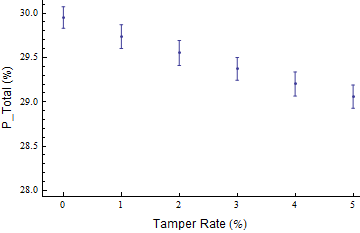}}
\end{picture}
\caption{This figure shows simulations results of the percentage of values retained ($P_{Total}$) at the first exchange when Alice and Bob introduce $16\%$ errors in their data, for an initial exchange of $5\times10^5$ bits. The error bars represent 1 standard deviation of error in the simulation results. Tampering is detectable at the $2-3\%$ level.}
\label{TamperPTotal}
\end{figure}

Figure \ref{TamperPTotal} shows the impact of tampering attacks on the value of $P_{Total}$ for attacks modifying $0-5\%$ of the data.
Hence, this particular tampering attack is detectable by examining $P_{Total}$ which reduces as  a function of $\tau$. 
Alice and Bob can monitor $P_{Total}$, with any significant drop indicating a tampering attack. 
A $2-5\%$ attack is clearly detectable and is outside a standard deviation of variation of the untampered value. 

This leaves a potential vulnerability to tampering attacks that introduce less than $2\%$ errors. 
The full protocol of key agreement now becomes important. 
It has already been shown that small amounts of tampering reduce Bob's mutual information with Alice and hence $P_{Total}$ in the first exchange, however it also, simultaneously, reduces Eve's mutual information with Bob.

In their exchange, and in the presence of tampering, Alice and Bob will arrive at the key source $S_T$ and $S'_T$ respectively which will have been reduced in size by Eve's tampering.
However, as key agreement is driven by Bob and not Alice, the final value of $I(S_T;S_T'')$ must therefore also be reduced, precisely because Eve has distanced herself from Bob. 
It remains the case that Eve's best attack is to do nothing.

Finally in completing the protocol, Alice and Bob, unaware of the tampering, can always use a Hash code derived without knowing the details of the attack and assuming a perfect Eve.
Figure \ref{TamperFinalKeyRate} shows the final key agreed by Alice and Bob when Eve has tampered with the data.
The key rate stays broadly flat with $I(\S_n;S_n'')\approx 0$. 
This is true for small levels of tampering and the situation changes if the error is substantially above $5\%$, hence continual monitoring of $P_{Total}$ is essential.
Alice should therefore also keep track of the information that Eve would have at each level of the protocol.

\begin{figure}[!t]
\centering
\begin{picture}(100,160)(10,0)
\put(-60,0){\includegraphics[width=3.2in]{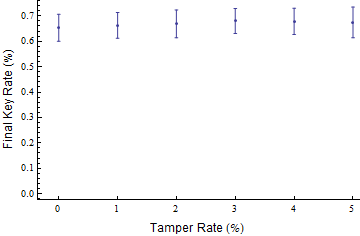}}
\end{picture}
\caption{This figure shows simulations the final secret key after four rounds of data matching, for an initial exchange of $5\times10^5$ bits. The final key rate is given by $(1-I(S_n;S''_n))P_{Final}$, where $P_{Final}$ is the final matching key agreed between Alice and Bob. The error bars represent 1 standard deviation statistical errors in the simulation results. In this simulation Alice and Bob introduce $16\%$ errors in their data.}
\label{TamperFinalKeyRate}
\end{figure}



\section{Conclusion}
This paper demonstrates the exchange of secret keys using an arbitrary communication channel with no {\it a priori} assumptions. 
This could therefore include channels such as the TCP/IP stack, or error corrected radio communication stacks. 
In this model Alice and Bob (the legitimate communicators) add noise to the data to simulate the action of a noisy channel. 
It has also been shown that a communication protocol exists that can be used to distill a secret key.

The nature of the noise is however key to the secrecy and therefore it must be unpredictable. 
Hence, the essential element is that the noise should be supplied by a truly random source. 
In practice this means that it should probably be sourced from quantum effects as these are both unpredictable and cannot, typically, be externally manipulated.

A feedback mechanism was also demonstrated where a new data set was constructed using a virtual error corrected channel exchanging data pairs.
This was shown to be capable of enhancing the overlap between Alice and Bob while reducing the noise in Eve. 
A potential attack has been considered where Eve attempts to degrade her data. This attack has been shown to lead to increased secrecy. 

The final step of extracting a key uses a two-way feedback protocol.
Such a protocol has not been considered in any detail in the current paper, however there are many examples in the literature that can be followed (see for example \cite{Silberhorn2002,VanAssche2004,Bennett1995,VanDijk1998,Brassard1994}).  

\section*{Acknowledgment}
Thanks go to W. Munro, T. Spiller and M. Everitt for useful feedback in assembling this paper.

\bibliographystyle{IEEEtran}

\bibliography{CryptoBib}

%



\end{document}